\begin{document}
	\newcommand{\argmin}{\arg\!\min}
	\newcommand{\argmax}{\arg\!\max}
	\newcommand{\st}{\text{s.t.}}
	\newcommand*\diff{\mathop{}\!\mathrm{d}}

	\newtheorem{assumption}{Assumption}
%
\title{A Game-Theoretic Framework for Controlled Islanding in the Presence of Adversaries}

\titlerunning{Controlled Islanding in the Presence of Adversaries}
\author{Luyao Niu\inst{1}$^{\dagger}$, Dinuka Sahabandu\inst{2}$^{\dagger}$, Andrew Clark\inst{1},\and Radha Poovendran\inst{2}}
\authorrunning{Niu et al.} 
\institute{Department of Electrical and Computer Engineering, \\
Worcester Polytechnic Institute, Worcester MA 01609, USA
\and Network Security Lab, Department of Electrical and Computer Engineering,\\ University of Washington, Seattle, WA 98195-2500, USA\\
\email{\{lniu,aclark\}@wpi.edu,\{sdinuka,rp3\}@uw.edu}
\thanks{This work was supported by AFOSR grant FA9550-20-1-0074 and NSF grant CNS-1941670. $^{\dagger}$Authors contributed equally to this work.}
}

\maketitle

\begin{abstract}

Controlled islanding effectively mitigates cascading failures by partitioning the power system into a set of disjoint islands. In this paper, we study the controlled islanding problem of a power system under disturbances introduced by a malicious adversary. We formulate the interaction between the grid operator and adversary using a game-theoretic framework. The grid operator first computes a controlled islanding strategy, along with the power generation for the post-islanding system to guarantee stability. The adversary observes the strategies of the grid operator. The adversary then identifies critical substations of the power system to compromise and trips the transmission lines that are connected with compromised substations. For our game formulation, we propose a double oracle algorithm based approach that solves the best response for each player. We show that the best responses for the grid operator and adversary can be formulated as mixed integer linear programs. In addition, the best response of the adversary is equivalent to a submodular maximization problem under a cardinality constraint, which can be approximated up to a $(1-\frac{1}{e})$ optimality bound in polynomial time. We compare the proposed approach with a baseline where the grid operator computes an islanding strategy by minimizing the power flow disruption without considering the possible response from the adversary. We evaluate both approaches using IEEE 9-bus, 14-bus, 30-bus, 39-bus, 57-bus, and 118-bus power system case study data. Our proposed approach achieves better performance than the baseline in about $44\%$ of test cases, and on average it incurs about 12.27 MW less power flow disruption.

\end{abstract}



\section{Introduction}

The electric power system is a complex large-scale network that delivers electricity to customers. Modern power systems leverage Internet of Things (IoT) technologies and have integrated information and communication components \cite{abur2004power}, leading to the smart grid paradigm. However, incorporating cyber components exposes the power system to malicious cyber attacks \cite{mo2011cyber,liu2012cyber}. For example, the service outage incurred by the Ukrainian electric company in 2015 was caused by malicious cyber attacks \cite{case2016analysis}.

Cyber attacks impact power systems by biasing the decision of the power grid operator, masking physical outages, and/or causing malfunctions of system components \cite{wang2013cyber}. These disturbances can potentially lead to \emph{cascading failures}. In a cascading failure, the outage of one component, e.g., a transmission line, shifts the load to other connected components, making them overload and fail. Power systems are under increasing risks of cascading failures since they are operated close to their capacity limits so as to meet ever-increasing electricity demands. Cascading failures can cause catastrophic economic consequences; the 2003 North American blackout, for example, left more than 55 million people in dark and caused 10 billion dollars loss \cite{muir2004final}. An intelligent adversary may therefore take advantage of cascading failures to cause severe damage to power systems using limited resources.


Controlled islanding has been demonstrated to be an effective countermeasure against cascading failures \cite{you2004slow}. Controlled islanding determines a subset of transmission lines to be tripped to partition the power system into multiple subsystems, following a disturbance such as transmission line outage.

Various techniques \cite{you2004slow,sun2003splitting,ding2012two,patsakis2019strong,demetriou2018real,liu2018controlled} have been proposed for designing controlled islanding strategies with different criteria such as power flow disruption and power imbalance. To the best of our knowledge, however, there has been little study focusing on controlled islanding for power systems in the presence of malicious adversaries. Different from exogenous causes such as natural disasters and increasing load demand \cite{pidd2012india}, intelligent adversaries can infer the islanding strategy of the grid operator and deliberately trip transmission lines to make the islands ineffective or even unstable. In the 2015 Ukrainian blackout, the adversaries compromised the substations and leveraged the strategies of the grid operator against the power system \cite{case2016analysis}. 


In this paper, we propose a game-theoretic model of controlled islanding to mitigate cascading failures in the presence of a malicious adversary. The adversary can compromise a subset of substations of the power system, and trip the transmission lines that are connected to the compromised substations. The grid operator aims at preventing cascading failure triggered by the adversary by implementing a controlled islanding strategy and designing corresponding post-islanding strategies. We make the following contributions:
\begin{itemize}
    \item We model the interaction between the grid operator and adversary as a Stackelberg game, which we formulate as a mixed integer nonlinear program. 
    \item We propose a double oracle algorithm based approach to solve for the strategies of the grid operator. The proposed approach iteratively computes the best response of each player.
    \item We analyze the best response for each player, and formulate it as a mixed integer linear program. In addition, we show the equivalence between the adversary's best response and a submodular function maximization problem with a cardinality constraint. A greedy algorithm can then be used to approximately compute the adversary's best response in polynomial time.
    \item We evaluate our proposed approach using IEEE 9-bus, 14-bus, 30-bus, 39-bus, 57-bus, and 118-bus power system case study data. We show that on average the power system performs better in $44\%$ of the test cases and incurs $12.27$~MW less power flow disruption when using the proposed approach, compared with a baseline that ignores the presence of an adversary.
\end{itemize}

The remainder of this paper is organized as follows. Related literature is reviewed in Section \ref{sec:related}. Preliminary background on power system model and Stackelberg games is presented in Section \ref{sec:prelim}. Section \ref{sec:formulation} gives the models for the adversary and grid operator, and maps the problem to a Stackelberg game. We present the solution approach in Section \ref{sec:sol}. Numerical evaluation results are presented in Section \ref{sec:simulation}. We conclude the paper in Section \ref{sec:conclusion}.
\section{Related Work}\label{sec:related}


Computing controlled islanding strategies for power systems under large disturbances has been extensively studied. Typical approaches include slow coherency based islanding \cite{you2004slow}, ordered binary decision diagram (OBDD) methods \cite{sun2003splitting}, two-step spectral clustering technique \cite{ding2012two}, weak submodularity based controlled islanding \cite{liu2018controlled}, and mixed integer program based approaches \cite{patsakis2019strong,demetriou2018real}. These works study the controlled islanding by assuming the disturbance has been detected and fixed. When there exists an adversary who can intelligently adjust its strategy, the islanding strategies computed using the aforementioned contributions need to be adjusted also to incorporate the possible response from the adversary.

Malicious attacks targeting power system have been reported and studied. The malicious attacks can be roughly classified into two categories. The first category of attacks manipulates the grid topology via transmission line switching \cite{nedic2006criticality,delgadillo2009analysis} and compromising substations \cite{zhu2014revealing}. Another category of attacks targets at the cyber components such as false data injection attack \cite{yang2013false,deng2015defending}. In this paper, we consider a malicious adversary that compromises substations using cyber attacks and trips transmission lines. Different from existing literature, the adversary model studied in this paper not only identifies the critical components in the power system, but also considers the possible corrective action taken by the grid operator. In addition, existing topological models may discard the power system dynamics to simplify the computations \cite{hasan2018vulnerability}, while this paper takes the physical properties of power systems into consideration.

In this paper, we map the interaction between the grid operator and adversary to a Stackelberg game. Stackelberg games have been widely used to model real-world security applications such as airport protection \cite{pita2009using}. To compute the Stackelberg equilibrium \cite{conitzer2016stackelberg} of the game in this paper, we propose a double oracle algorithm based approach \cite{mcmahan2003planning}. Double oracle algorithm has been widely used to solve games of large-scale \cite{jain2011double,karwowski2020double,bosansky2014exact}, due to the advantage that it avoids the enumeration over all possible strategies for the players.
\section{Model and Preliminaries}\label{sec:prelim}


\subsection{Power System Model}

A power system with $B$ substations and $L$ transmission lines can be described by a graph $\mathcal{G}=(\mathcal{B},\mathcal{L})$, where $\mathcal{B}=\{1,\ldots,B\}$ is the set of substations and $\mathcal{L}\subseteq\mathcal{B}\times\mathcal{B}$ is the set of transmission lines. A transmission line $l=(i,j)\in\mathcal{L}$ if substations $i$ and $j$ are connected via $l$. We define the set of neighboring substations for each $i\in\mathcal{B}$ as $\mathcal{T}(i)=\{j:(i,j)\in\mathcal{L}\}$. The power injected to substation $i$ is denoted as $g_i$, and the power drawn from substation $i$ is denoted as $d_i$.

We consider DC power flow in the power system. The power flow $P_{i,j}$ through each transmission line $(i,j)$ is calculated as
\begin{equation}\label{eq:DC power flow}
    P_{i,j}=S_{i,j}(\theta_i-\theta_j),~\forall (i,j)\in\mathcal{L}
\end{equation}
where $S_{i,j}$ is the electrical susceptance of transmission line $(i,j)$, and $\theta_i$, $\theta_j$ are the voltage angles at substations $i$ and $j$, respectively. Each substation $i\in\mathcal{B}$ respects the power flow conservation law given as
\begin{equation}\label{eq:power balance}
    \sum_{j\in\mathcal{T}(i)}P_{j,i} + g_i - d_i=0,~\forall i\in\mathcal{B}.
\end{equation}

Power generators exhibit varying behaviors following a large disturbance. Two generators are said to be \emph{coherent} if their rotor angle deviations are within a certain tolerance \cite{haque1990identification}. To maintain the internal stability of the power system, the coherent generators need to be connected, while the non-coherent ones must be separated during islanding. In this paper, we assume that the set of power generators are classified into $K$ coherent groups. Detailed techniques on computing coherent groups can be found in \cite{chow1982time}.

There are various metrics that have been proposed to measure the performance of power system when incurring disturbance. Typical metrics include power flow disruption \cite{peiravi2009fast,yang2007novel} and power imbalance \cite{zhao2003study,sun2003splitting}. Minimum power flow disruption improves the transient stability of the system and reduces the risk of overloading transmission lines \cite{henner1980network}. In this paper, we adopt power flow disruption as the performance metric, which is defined as
\begin{equation}\label{eq:power flow disruption}
    R(\mathcal{S})=\sum_{(i,j)\in\mathcal{L}}(1-z_{i,j})\frac{|P_{i,j}|+|P_{j,i}|}{2},
\end{equation}
where $\mathcal{S}\subseteq\mathcal{L}$ represents the set of tripped transmission lines. Parameter $z_{i,j}=1$ if $(i,j)\in\mathcal{L}\setminus\mathcal{S}$ and $z_{i,j}=0$ if $(i,j)\in\mathcal{S}$.

\subsection{Stackelberg Game}\label{sec:Stackelberg}

Game theory models the interaction among multiple players. Consider a game consisting of two players, denoted as Player 1 and Player 2. Players 1 and 2 have their action spaces $\mathcal{A}_1$ and $\mathcal{A}_2$, respectively. Each action of $\mathcal{A}_1$ and $\mathcal{A}_2$ is also known as the pure strategy for Player 1 and Player 2, respectively. A mixed strategy is a probability distribution over the action space. When Players 1 and 2 take strategies $s_1$ and $s_2$, respectively, they obtain utilities $U_1(s_1,s_2)$ and $U_2(s_1,s_2)$ during the interaction. 

Two-player Stackelberg games model interactions with information asymmetry, where Player 1 moves first by committing to a strategy, and Player 2 observes the strategy committed by Player 1 and chooses its strategy to maximize $U_2(\cdot,\cdot)$. Player 1 and Player 2 are also known as the leader and follower, respectively. 

The solution concept of Stackelberg game is called Stackelberg equilibrium. We say strategies $s_1^*$ and $s_2^*$ for Players 1 and 2 comprise a Stackelberg equilibrium if $s_1^*=\argmax_{s_1} U(s_1,s_2^*)$, where $s_2^*\in \mathcal{BR}(s_1^*)$ and $\mathcal{BR}(s_1^*)=\argmax_{s_2}\{\allowbreak U_2( s_1^*,\allowbreak s_2)\}$ is the best response taken by Player 2 to $s_1^*$.
\section{Problem Formulation}\label{sec:formulation}

\subsection{Adversary Model}

We consider a power system $\mathcal{G}=(\mathcal{B},\mathcal{L})$. A malicious adversary aims at destablizing the power system and maximizing the power flow disruption. To achieve this goal, the adversary has two capabilities: (i) the adversary can compromise at most $C$ substations $\hat{\mathcal{B}}\subset\mathcal{B}$, and (ii) the adversary can trip the set of transmission lines that are connected with the compromised substations. These capabilities have been demonstrated by real-world adversaries. For instance, the adversary that initiated the attack against the Ukrainian electric system compromised the substations and thus gained control over field devices \cite{case2016analysis}. 

We assume the adversary has access to the following information. The adversary knows the grid topology $\mathcal{G}=(\mathcal{B},\mathcal{L})$ and the power flow before it trips any transmission line. In addition, the adversary can observe the strategies taken by the grid operator (we will detail the model of grid operator in Section \ref{sec:operator model}). We denote the information available to the adversary as $\mathcal{I}^a$. It has been reported that the adversary can harvest such information via cyber attacks \cite{case2016analysis}.

In the following, we define the strategy for the adversary. A pure strategy for the adversary $\tau:\mathcal{I}^a\rightarrow 2^\mathcal{B}\times2^\mathcal{L}$ maps from the information set of the adversary to a pair of compromised substations $\hat{\mathcal{B}}$ and tripped transmission lines $\hat{\mathcal{L}}$. A mixed strategy for the adversary $\tau:\mathcal{I}^a\rightarrow \Delta(2^\mathcal{B}\times2^\mathcal{L})$ maps from the information set of the adversary to a pair of probability distributions over $2^\mathcal{B}\times2^\mathcal{L}$, where $\Delta(\cdot)$ represents a probability distribution over some set. We define the set of proper adversary strategies as follows.
\begin{definition}\label{def:proper adv strategy}
We say strategy $\tau$ is proper if the following conditions hold: (i) $|\hat{\mathcal{B}}|\leq C$, and (ii) $\hat{\mathcal{L}}\subseteq\{(i,j)\in\mathcal{L}:i\in\hat{\mathcal{B}}\text{ or }j\in\hat{\mathcal{B}}\}$.
\end{definition}

The adversary computes its strategy $\tau$ as
\begin{subequations}\label{eq:adv}
\begin{align}
    \max_{\tau}~&R(\mathcal{S})\\
    \text{subject to }&\tau\text{ is proper}
\end{align}
\end{subequations}
where $R(\mathcal{S})$ is defined in Eqn. \eqref{eq:power flow disruption}, $\mathcal{S}=\hat{\mathcal{L}}\cup\Tilde{\mathcal{L}}\subseteq\mathcal{L}$, and $\Tilde{\mathcal{L}}$ is the set of transmission lines tripped by the grid operator (we will introduce $\Tilde{\mathcal{L}}$ later in Section \ref{sec:operator model}).

\subsection{Grid Operator Model}\label{sec:operator model}

In this subsection, we present the model of the grid operator. The goal of the grid operator is to protect the power system $\mathcal{G}=(\mathcal{B},\mathcal{L})$ when large disturbance is incurred. The grid operator has the following control capabilities. The grid operator can trip a subset of transmission lines $\Tilde{\mathcal{L}}\subset\mathcal{L}\setminus\hat{\mathcal{L}}$ to partition the power system into a collection of subsystems $\{\mathcal{G}_k\}_{k=1}^K$, where $\mathcal{G}_k=(\mathcal{B}_k,\mathcal{L}_k)$, $\mathcal{B}_k\subset\mathcal{B}$, and $\mathcal{L}_k\subset\mathcal{L}$. A subsystem $\mathcal{G}_k$ is also known as an island. After the power system is partitioned into subsystems, the grid operator controls the power injection $g_i$ from each generator at each generation substation $i\in\mathcal{B}$. 

We assume that the grid operator knows the grid topology and has perfect observation over the power system so that it can monitor the parameters such as the voltage angle at each substation, the power flow at each transmission line, the power injection from the generators, and the power drawn by the load demands. Additionally, the grid operator can compute the set of generator coherent groups. We denote the information available to the grid operator as $\mathcal{I}^o$.

A pure strategy for the grid operator is defined as $\mu:\mathcal{I}^o\rightarrow 2^\mathcal{L}\times \mathbb{R}^N$ that maps from $\mathcal{I}^o$ to the set of possibly tripped transmission lines and the space of power generations. Note that here the set of open transmission lines are tripped by the grid operator, and is different from those tripped by the adversary. A mixed strategy for the grid operator is defined as $\mu:\mathcal{I}^o\rightarrow \Delta(2^\mathcal{L}\times \mathbb{R}^N)$. We define the set of proper strategies for the grid operator.
\begin{definition}\label{def:proper operator strategy}
A strategy $\mu$ for the grid operator is proper if the following conditions hold: (i) strategy $\mu$ partitions the power system into disjoint subsystems, i.e., $\mathcal{B}_k\cap\mathcal{B}_{k'}=\emptyset$ and $\mathcal{L}_k\cap\mathcal{L}_{k'}=\emptyset$, (ii) the generators belonging to the same coherent group are within the same subsystem, (iii) each subsystem $\mathcal{G}_k$ is connected, (iv) the post-islanding power generation and voltage angle are within the generation capacity for each generator and voltage angle bound for each substation, respectively, (v) the post-islanding power flow does not exceed the transmission capacities for all transmission lines, and (vi) the post-islanding power generation meets the load demand.
\end{definition}


\subsection{Interaction Model Between the Grid Operator and Adversary}

In this subsection, we present the interaction model between the grid operator and adversary. We denote the mixed strategy of the grid operator as $\mu:\mathcal{I}^o\rightarrow \Delta(2^\mathcal{L},\mathbb{R}^N)$. The adversary observes strategy $\mu$ of the grid operator by intruding into the power network and learning the strategies of the grid operator, and then computes a proper strategy $\tau$. Then the adversary executes its attack strategy $\tau$ so as to destabilize the power system and maximize the power flow disruption. Once the grid operator detects the disturbance caused by the adversary, it samples a pair $(\Tilde{\mathcal{L}},g)$ following strategy $\mu$, and implements the sampled action to partition the system into a collection of subsystems.

There exists information asymmetry during the interaction between the grid operator and adversary. The adversary observes the strategy of the grid operator, while the grid operator has no information on the strategy of the adversary. This information asymmetry is captured by the Stackelberg game as described in Section \ref{sec:Stackelberg}. During this interaction, since the grid operator computes strategy $\mu$ first, it becomes the leader in the Stackelberg game. The adversary, who observes the leader's strategy, is the follower in this setting. 

The problem investigated in this paper is stated as follows.
\begin{problem}\label{prob:formulation}
Consider a power system $\mathcal{G}=(\mathcal{B},\mathcal{L})$. Synthesize a proper strategy $\mu$ for the grid operator that minimizes the power flow disruption, given that the adversary observes $\mu$ and computes its best response to $\mu$, i.e.,
\begin{subequations}\label{eq:Stackelberg}
\begin{align}
    \min_{\mu}~&\mathbb{E}_{\mu}[R(\mathcal{S}(\mu,\tau))]\\
    \text{subject to }&\mu\text{ is proper}\\
    &\tau\in\mathcal{BR}(\mu)
\end{align}
\end{subequations}
where $\mathbb{E}_{\mu}[\cdot]$ denotes the expectation with respect to $\mu$ and $\mathcal{S}(\mu,\tau)=\hat{\mathcal{L}}\cup\Tilde{\mathcal{L}}\subseteq\mathcal{L}$ is the set of tripped transmission lines that is jointly determined by $\mu$ and $\tau$.
\end{problem}
Note that the interaction between the grid operator and adversary is zero-sum. We can thus establish the existence of Stackelberg equilibrium strategies $\mu$ and $\tau$ of the game in Eqn. \eqref{eq:Stackelberg} using \cite[Section 2]{conitzer2016stackelberg}.

\section{Solution Approach}\label{sec:sol}

In this section, we present the solution approach to Problem \ref{prob:formulation}. We prove that the sets of proper strategies $\mu$ and $\tau$ can be mapped to sets of mixed integer constraints. Then the optimization problem in Eqn. \eqref{eq:Stackelberg} is formulated as a mixed integer nonlinear program. We propose a double oracle algorithm based approach to compute the Stackelberg equilibrium strategies. The proposed approach computes the best response of each player in each iteration. We show that the best responses for both players can be formulated as mixed integer linear programs.

\subsection{Mixed Integer Nonlinear Bi-level Optimization Formulation}

In this subsection, we first map the set of proper strategies $\mu$ and $\tau$ to a set of mixed integer constraints. We then rewrite Eqn. \eqref{eq:Stackelberg} as a mixed integer nonlinear bi-level optimization problem.

We let $y_{i}$ be a binary variable representing if substation $i\in\mathcal{B}$ is compromised by the adversary ($y_i=1$) or not ($y_i=0$). We then have the following constraints:
\begin{equation}\label{eq:constr attack substation}
    \sum_{i\in\mathcal{B}}y_i\leq C,\quad y_i\in\{0,1\},~\forall i\in\mathcal{B}.
\end{equation}

We define $z_{i,j}^a$ as an indicator function for each transmission line $(i,j)$ to represent if transmission line $(i,j)$ is tripped ($z_{i,j}^a=0$) or not ($z_{i,j}^a=1$) by the adversary. Note that the adversary can trip a transmission line $(i,j)$ if and only if substation $i$ or $j$ is compromised. We formulate this property as 
\begin{subequations}\label{eq:constr attack indicator}
\begin{align}
    &z_{i,j}^a\in\{0,1\},~z_{i,j}^a=z_{j,i}^a,~\forall (i,j)\in\mathcal{L}\label{eq:constr adv line indicator 1}\\
    &z_{i,j}^a+y_i+y_j\geq 1,~\forall i\in\mathcal{B},~\forall (i,j)\in\mathcal{L}.\label{eq:constr adv line indicator 2}
\end{align}
\end{subequations}

We denote $y$ as the vector obtained by stacking $y_i$ for all $i\in\mathcal{B}$ and $z^a$ as the vector obtained by stacking $z_{i,j}^a$ for all $(i,j)\in\mathcal{L}$. We characterize the relations given by Eqn. \eqref{eq:constr attack substation} and \eqref{eq:constr attack indicator} as follows.
\begin{lemma}\label{lemma:adv proper}
The set of proper strategies for the adversary is equal to the set of feasible solutions $(y,z^a)$ to Eqn. \eqref{eq:constr attack substation} and \eqref{eq:constr attack indicator}.
\end{lemma}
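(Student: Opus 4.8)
The plan is to prove the set equality by exhibiting the natural encoding that identifies a proper adversary strategy $(\hat{\mathcal{B}},\hat{\mathcal{L}})$ with a binary vector pair $(y,z^a)$, and then verifying that properness in the sense of Definition \ref{def:proper adv strategy} holds if and only if the constraints \eqref{eq:constr attack substation} and \eqref{eq:constr attack indicator} hold. Concretely, I would set $y_i = 1 \Leftrightarrow i \in \hat{\mathcal{B}}$ and, following the paper's convention, $z_{i,j}^a = 0 \Leftrightarrow (i,j) \in \hat{\mathcal{L}}$ (line tripped). Under this dictionary the relations $\hat{\mathcal{B}} = \{i : y_i = 1\}$ and $\hat{\mathcal{L}} = \{(i,j)\in\mathcal{L} : z_{i,j}^a = 0\}$ recover the strategy from its indicators, so the correspondence is a bijection; it then suffices to check that the two defining conditions match the constraints term by term.

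For condition (i), I would observe that $|\hat{\mathcal{B}}| = \sum_{i\in\mathcal{B}} y_i$, so the cardinality bound $|\hat{\mathcal{B}}| \le C$ is literally the constraint \eqref{eq:constr attack substation}. The symmetry requirement $z_{i,j}^a = z_{j,i}^a$ in \eqref{eq:constr adv line indicator 1} is merely the statement that trip decisions are attached to undirected edges, which is automatically respected by any set $\hat{\mathcal{L}} \subseteq \mathcal{L}$. The substantive step is condition (ii), which I would handle by its contrapositive: the containment $\hat{\mathcal{L}} \subseteq \{(i,j) : i\in\hat{\mathcal{B}} \text{ or } j\in\hat{\mathcal{B}}\}$ says that whenever $(i,j)$ is tripped at least one endpoint is compromised, i.e. $z_{i,j}^a = 0 \Rightarrow (y_i = 1 \text{ or } y_j = 1)$. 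Taking contrapositives gives $y_i = y_j = 0 \Rightarrow z_{i,j}^a = 1$; since all three variables are binary, this is precisely $z_{i,j}^a + y_i + y_j \ge 1$, which is \eqref{eq:constr adv line indicator 2}.

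With this equivalence in hand both inclusions follow immediately: given a proper $\tau$, the induced pair $(y,z^a)$ satisfies every constraint, and conversely, given any feasible $(y,z^a)$, the induced sets $(\hat{\mathcal{B}},\hat{\mathcal{L}})$ satisfy both properness conditions. I do not anticipate a genuine obstacle here, as the content is a translation between set language and $0/1$-indicator language. The only point demanding care is the paper's inverted convention in which $z_{i,j}^a = 0$ denotes a tripped line rather than an active one, since mishandling it would turn the $\ge 1$ inequality into a $\le$ constraint of the wrong form; keeping the contrapositive bookkeeping straight is what makes the linear inequality \eqref{eq:constr adv line indicator 2} come out correctly.
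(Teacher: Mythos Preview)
Your proposal is correct and follows essentially the same approach as the paper: both arguments set up the indicator correspondence and then verify conditions (i) and (ii) of Definition~\ref{def:proper adv strategy} against \eqref{eq:constr attack substation} and \eqref{eq:constr attack indicator} term by term, with condition (ii) reduced to the observation that $z_{i,j}^a$, $y_i$, $y_j$ cannot all vanish. If anything, your version is slightly more explicit about the bijection and the contrapositive reasoning, but the logical content is the same.
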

\begin{proof}
We prove the statement using Definition \ref{def:proper adv strategy}. Consider condition (i) in Definition \ref{def:proper adv strategy}. Since $y_i\in\{0,1\}$ for all $i\in\mathcal{B}$, we have that if $y$ is feasible to Eqn. \eqref{eq:constr attack substation}, then at most $C$ substations can be compromised. Consider condition (ii) in Definition \ref{def:proper adv strategy}. By Definition \ref{def:proper adv strategy} and the definitions of $z^a$ and $y$, we have that $z_{i,j}^a=0$ only if $y_i+y_j\geq 1$. However, $y_i+y_j\geq 1$ does not necessarily imply that $z_{i,j}^a=0$, i.e., the adversary can choose to not trip transmission line $(i,j)$ even if substation $i$ or $j$ is compromised. In addition, $z_{i,j}^a=1$ must hold if $y_i+y_j=0$. Summarizing these three possible scenarios, we have that $z_{i,j}^a$, $y_i$, and $y_j$ cannot be zero simultaneously, which is equivalent to Eqn. \eqref{eq:constr adv line indicator 2}. Combining the arguments above yields the lemma.
\end{proof}

Consider the set of proper strategies for the grid operator. Note that the pure strategy space for the grid operator grows exponentially with respect to the number of transmission lines $L$. To this end, we define a set of variables for each transmission line as a compact representation of the set of proper strategies. 

Let $x_{i,k}$ be a binary indicator representing if substation $i$ is included in subsystem $k$ ($x_{i,k}=1$) or not ($x_{i,k}=0$) for all $i\in\mathcal{B}$ and $k=1,\ldots,K$. In addition, we let $w_{i,j,k}$ be an indicator, representing if transmission line $(i,j)\in\mathcal{L}$ is included ($w_{i,j,k}=1$) in subsystem $\mathcal{G}_k=(\mathcal{B}_k,\mathcal{L}_k)$ or not ($w_{i,j,k}=0$). For each transmission line $(i,j)$, we define $z^o_{i,j}$ as an indicator representing if transmission line $(i,j)\in\mathcal{L}$ is tripped ($z^o_{i,j}=0$) or not ($z^o_{i,j}=1$) by the grid operator. We then formulate the constraints as
\begin{subequations}\label{eq:constr indicator}
\begin{align}
    &w_{i,j,k}\in\{0,1\},~w_{i,j,k}\leq x_{i,k},~w_{i,j,k}\leq x_{j,k},~\forall (i,j)\in\mathcal{L},~\forall k=1,\ldots,K\label{eq:constr indicator 1}\\
    &z_{i,j}^o=\sum_{k=1}^Kw_{i,j,k},~z_{i,j}^o\in\{0,1\},~z_{i,j}^o=z_{j,i}^o,~\forall (i,j)\in\mathcal{L}\label{eq:constr indicator 2}\\
    &\sum_{k=1}^Kx_{i,k}\leq 1,~\forall i\in\mathcal{B}\label{eq:constr indicator 3}\\
    &x_{i,k}\in\{0,1\},~\forall i\in\mathcal{B},~\forall k=1,\ldots,K\label{eq:constr indicator 5}\\
    &z_{i,j}^o\leq z_{i,j}^a,~\forall (i,j)\in\mathcal{L}\label{eq:constr indicator 6}
\end{align}
\end{subequations}
Eqn. \eqref{eq:constr indicator 6} captures the fact that the grid operator takes islanding action after the adversary executes the malicious attack. Hence, the grid operator cannot open a transmission line that has been tripped by the adversary.

Given the generator coherent groups, we let indicator $v_{i,k}=1$ if generation substation $i$ is set as the reference generator and belongs to subsystem $\mathcal{G}_k$. Then using the coherent group, we can let
\begin{equation}\label{eq:constr coherent}
    x_{j,k}=v_{i,k},~\forall i,j\in\mathcal{C}_k,
\end{equation}
where $\mathcal{C}_k$ represents the $k$-th generator coherent group. In addition, each subsystem $\mathcal{G}_k$ is required to be connected. In order to incorporate this constraint, we define an auxiliary flow $f_{i,j,k}$ on each transmission line $(i,j)$ of subsystem $k$. Then the auxiliary flow should respect the flow conservation law given as
\begin{subequations}\label{eq:constr connectivity}
\begin{align}
    &0\leq f_{i,j,k}\leq Zz^o_{i,j},~\forall (i,j)\in\mathcal{L}\label{eq:constr connectivity 1}\\
    &v_{i,k}\sum_{j\in\mathcal{B}}x_{j,k} - x_{i,k} + \sum_{j\in\mathcal{T}(i)}f_{j,i,k}=\sum_{j\in\mathcal{T}(i)}f_{i,j,k},~\forall i\in\mathcal{B},k=1,\ldots,K\label{eq:constr connectivity 2}
\end{align}
\end{subequations}
where $Z$ is a sufficiently large positive constant. The first term of Eqn. \eqref{eq:constr connectivity 2} implies that $\sum_{j\in\mathcal{B}}x_{j,k}$ amount of auxiliary flow originates from the reference generator of subsystem $\mathcal{G}_k$. The second term of Eqn. \eqref{eq:constr connectivity 2} indicates that one unit of auxiliary flow is consumed at substation $i$. The remaining two terms of Eqn. \eqref{eq:constr connectivity 2} capture the incoming and outgoing auxiliary flows at substation $i$.

Relations given in Eqn. \eqref{eq:constr indicator} to Eqn. \eqref{eq:constr connectivity} characterize the topological properties of each subsystem $\mathcal{G}_k$. In the following, we characterize the physical properties including the power flow and voltage angle in the power system after controlled islanding is implemented.

Each generator is constrained by its generation capacity modeled as
\begin{equation}\label{eq:constr generation}
    \underline{g}_{i}\leq g_i\leq \bar{g}_{i},~\forall i\in\mathcal{B}
\end{equation}
where $\underline{g}_{i}$ and $\bar{g}_{i}$ are the minimum and maximum power generation capacities for generation substation $i$, respectively.
We denote the post-islanding power flow on transmission line as $\Tilde{P}_{i,j}$ and voltage angle of substation $i$ as $\theta_i$. By Eqn. \eqref{eq:DC power flow}, we have that $S_{i,j}(\theta_i-\theta_j)-\Tilde{P}_{i,j}=0$ holds for all $(i,j)\in\mathcal{L}$.
To incorporate the fact that the transmission line $(i,j)$ can be tripped by the grid operator and adversary, we have that
\begin{equation}\label{eq:constr power flow}
    -(1-z_{i,j}^oz_{i,j}^a)Z\leq S_{i,j}(\theta_i-\theta_j)-\Tilde{P}_{i,j}
    \leq(1-z_{i,j}^oz_{i,j}^a)Z,
\end{equation}
where $Z$ is a sufficiently large positive constant. Taking the transmission line capacity and voltage angle bound into consideration, we have
\begin{equation}\label{eq:constr bound}
    \underline{P}_{i,j}z_{i,j}^oz_{i,j}^a\leq \tilde{P}_{i,j}\leq \bar{P}_{i,j}z_{i,j}^oz_{i,j}^a,~\forall (i,j)\in\mathcal{L},~
    \underline{\theta}_i\leq\theta_i\leq\bar{\theta}_i,~\forall i\in\mathcal{B},
\end{equation}
where $\bar{P}_{i,j}$ and $\underline{P}_{i,j}$ are the maximal and minimal power flow capacity for transmission line $(i,j)$, and $\underline{\theta}_i$ and $\bar{\theta}_i$ are respectively the minimum and maximum voltage angle at substation $i$. Using Eqn. \eqref{eq:constr bound}, we observe that the only feasible power flow through a tripped transmission line is zero. By Eqn. \eqref{eq:power balance}, the power balance at each substation $i$ is modeled as
\begin{equation}\label{eq:constr power balance}
    \sum_{j\in\mathcal{T}(i)}\Tilde{P}_{j,i}+g_{i} - d_i=0,~\forall i\in\mathcal{B}.
\end{equation}

We denote $w$, $x$, $v$, $z^o$, $f$, $\Tilde{P}$, $g$, and $\theta$ as the vectors or matrices that are obtained by stacking $w_{i,j,k}$, $x_{i,k}$, $v_{i,k}$, $z_{i,j}^o$ $f_{i,j,k}$, $\Tilde{P}_{i,j}$, $g_{i}$, and $\theta_{i}$, respectively. We characterize Eqn. \eqref{eq:constr indicator} to \eqref{eq:constr power balance} as follows.
\begin{lemma}\label{lemma:operator proper}
If variables $w$, $x$, $t$, $z^o$, $f$, $\Tilde{P}$, $g$, and $\theta$ are feasible to Eqn. \eqref{eq:constr indicator} to Eqn. \eqref{eq:constr power balance}, then these variables represent a proper strategy for the grid operator as given in Definition \ref{def:proper operator strategy}
\end{lemma}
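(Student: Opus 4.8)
The plan is to verify the six conditions of Definition~\ref{def:proper operator strategy} one at a time, in each case pointing to the subset of the constraints \eqref{eq:constr indicator}--\eqref{eq:constr power balance} that forces it. First I would read the islands off a feasible point by setting $\mathcal{B}_k=\{i:x_{i,k}=1\}$, $\mathcal{L}_k=\{(i,j):w_{i,j,k}=1\}$, and recording the tripped lines through $z^o$.

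I would begin with the topological conditions (i) and (ii), which are the easy ones. For disjointness, \eqref{eq:constr indicator 3} gives $\sum_{k}x_{i,k}\leq 1$, so each substation lies in at most one $\mathcal{B}_k$ and the vertex sets are disjoint; then $w_{i,j,k}\leq x_{i,k}$ and $w_{i,j,k}\leq x_{j,k}$ from \eqref{eq:constr indicator 1} place a line in $\mathcal{L}_k$ only when both endpoints already lie in $\mathcal{B}_k$, so the edge sets inherit disjointness. For the coherency condition (ii), specializing \eqref{eq:constr coherent} to $i=j$ yields $x_{i,k}=v_{i,k}$ on each group $\mathcal{C}_k$, which forces every generator of $\mathcal{C}_k$ into the single island indexed by its reference generator.

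The technical heart of the proof is connectivity (iii), which I expect to be the main obstacle. The crucial preliminary observation is that \emph{no retained line crosses between islands}: by \eqref{eq:constr indicator 2}, $z^o_{i,j}=1$ forces $w_{i,j,k}=1$ for some $k$, whence $x_{i,k}=x_{j,k}=1$, so both endpoints of any retained line sit in a common island. Granting this, condition (iii) becomes a standard single-commodity flow certificate read off \eqref{eq:constr connectivity}. In \eqref{eq:constr connectivity 2} the reference node of island $k$ acts as a source of $N_k:=\sum_{j\in\mathcal{B}}x_{j,k}$ units, each node of $\mathcal{B}_k$ is a unit sink, and every node outside $\mathcal{B}_k$ has $v_{i,k}=x_{i,k}=0$ and is pure transshipment; by \eqref{eq:constr connectivity 1} the flow is nonnegative and supported only on retained lines, which by the observation above lie strictly inside islands, so commodity-$k$ flow supplying the sinks cannot exit $\mathcal{B}_k$. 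Feasibility of this flow then certifies that every node of $\mathcal{B}_k$ is reachable from the reference along retained lines, i.e. $\mathcal{G}_k$ is connected. I would spell out this reachability step by a cut argument: if $\mathcal{B}_k$ split into a component $\mathcal{A}$ containing the reference and a nonempty complement $\mathcal{A}'$ with no retained line between them, the $|\mathcal{A}'|$ units demanded in $\mathcal{A}'$ could not be delivered, contradicting conservation.

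Finally I would dispatch the physical conditions (iv)--(vi). Condition (iv) follows because \eqref{eq:constr generation} bounds each $g_i$ within its capacity and the voltage part of \eqref{eq:constr bound} enforces $\underline{\theta}_i\leq\theta_i\leq\bar{\theta}_i$. For condition (v), the flow part of \eqref{eq:constr bound} bounds $\tilde{P}_{i,j}$ by the line capacity on a retained line ($z^o_{i,j}z^a_{i,j}=1$) and pins $\tilde{P}_{i,j}=0$ on any line tripped by either player, while \eqref{eq:constr power flow} reconciles $\tilde{P}_{i,j}$ with the DC relation $S_{i,j}(\theta_i-\theta_j)$ on retained lines and decouples it via the big-$Z$ slack on tripped lines. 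Condition (vi) is immediate, since \eqref{eq:constr power balance} is exactly the conservation law \eqref{eq:power balance} written with the post-islanding flows, so generation meets demand at every substation. Collecting the six verifications establishes that the feasible point represents a proper grid-operator strategy, completing the proof.
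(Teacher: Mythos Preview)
Your proof is correct and follows essentially the same condition-by-condition structure as the paper's proof. The one noteworthy difference is in condition~(iii): the paper argues by contradiction from a single disconnected node and reads off $-x_{i,k}=0$ directly from \eqref{eq:constr connectivity 2}, whereas you first establish the preliminary fact that no retained line crosses islands (so commodity-$k$ flow is confined to $\mathcal{B}_k$) and then apply a cut argument; your version is more rigorous, since the paper's ``isolated node'' case is not literally without loss of generality and its vanishing of the flow terms is left implicit.
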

\begin{proof}
Consider variables $w$, $x$, $t$, $z^o$, $f$, $\Tilde{P}$, $g$, and $\theta$ that are feasible to Eqn. \eqref{eq:constr indicator} to Eqn. \eqref{eq:constr power balance}. We then verify that conditions (i)-(vi) in Definition \ref{def:proper operator strategy} are satisfied.

\underline{\emph{Satisfaction of Condition (i).}} Suppose that Eqn. \eqref{eq:constr indicator} is satisfied while the subsystems $\mathcal{G}_k$ are not disjoint. Thus we have that there exists $k\neq k'$ such that $x_{i,k}=x_{i,k'}=1$ holds for some $i\in\mathcal{B}$ or $w_{i,j,k}=w_{i,j,k'}=1$ holds for some $(i,j)\in\mathcal{L}$. If $x_{i,k}=x_{i,k'}=1$ holds for some $i\in\mathcal{B}$, then Eqn. \eqref{eq:constr indicator 3} is violated. $w_{i,j,k}=w_{i,j,k'}=1$ holds for some $(i,j)\in\mathcal{L}$, then Eqn. \eqref{eq:constr indicator 2} implies that $z_{i,j}^o>1$, which leads to contradiction. Thus, condition (i) of Definition \ref{def:proper operator strategy} is satisfied.

\underline{\emph{Satisfaction of condition (ii)}.} Condition (ii) holds immediately by the definition of $x_{i,k}$, $v_{i,k}$, and Eqn. \eqref{eq:constr coherent}.

\underline{\emph{Satisfaction of condition (iii)}.} Suppose $f$ satisfies Eqn. \eqref{eq:constr connectivity} while there exists some subsystem $\mathcal{G}_k$ that is not connected. Without loss of generality, we assume that substation $i$ belonging to subsystem $\mathcal{G}_k$ is not connected with substations $j\in\mathcal{B}_k\setminus\{i\}$. If substation $i$ is not the $k$-th reference generation substation, then Eqn. \eqref{eq:constr connectivity 2} becomes $-x_{i,k}=0$, which contradicts our hypothesis that $x_{i,k}=1$. If substation $i$ is the $k$-th reference generation substation, then $v_{i,k}=1$ and Eqn. \eqref{eq:constr connectivity 2} is rewritten as $\sum_{j\in\mathcal{B}}x_{j,k} - x_{i,k}=0$, which leads contradiction since there exists $j\in\mathcal{B}_k\setminus\{i\}$ such that $x_{j,k}=1$. Therefore, we can conclude that  condition (iii) of Definition \ref{def:proper operator strategy} is satisfied when Eqn. \eqref{eq:constr connectivity} is satisfied.

\underline{\emph{Satisfaction of condition (iv)}.} Condition (iv) follows from Eqn. \eqref{eq:constr generation} and \eqref{eq:constr bound}.

\underline{\emph{Satisfaction of condition (v)}.} Consider Eqn. \eqref{eq:constr bound} for a transmission line $(i,j)$. If transmission line $(i,j)$ is tripped by either the adversary or the grid operator, then Eqn. \eqref{eq:constr bound} implies that $\tilde{P}_{i,j}=0$, which satisfies the power flow equation. If transmission line $(i,j)$ is tripped by neither the adversary nor the grid operator, then power flow $\tilde{P}_{i,j}$ satisfies Eqn. \eqref{eq:DC power flow}. The transmission line capacity constraint then immediately follows from Eqn. \eqref{eq:constr bound}.

\underline{\emph{Satisfaction of condition (vi)}.} Condition (vi) of Definition \ref{def:proper operator strategy} holds by the definitions of $\tilde{P}_{i,j}$, $g_i$, $d_i$, and Eqn. \eqref{eq:power balance}.
\end{proof}

Lemma \ref{lemma:adv proper} and \ref{lemma:operator proper} imply that we can represent the pure strategy space using a collection of variables, whose size is polynomial in terms of $B$ and $L$. Using these variables, we can rewrite optimization problem \eqref{eq:Stackelberg} as
\begin{subequations}\label{eq:bilevel}
\begin{align}
    \min_{w,x,z^o,f,\tilde{P},g,\theta}~&\mathbb{E}_\mu\left[\sum_{(i,j)\in\mathcal{L}}(1-z_{i,j}^oz_{i,j}^a)\frac{|P_{i,j}|+|P_{j,i}|}{2}\right]\label{eq:bilevel 1}\\
    \text{subject to }~& \text{Eqn. \eqref{eq:constr indicator} to Eqn. \eqref{eq:constr power balance}}\label{eq:bilevel 2}\\
    &(y,z^a)\in\argmax\sum_{(i,j)\in\mathcal{L}}(1-z_{i,j}^oz_{i,j}^a)\frac{|P_{i,j}|+|P_{j,i}|}{2}\label{eq:bilevel 3}\\
    &\text{subject to}~\text{Eqn. \eqref{eq:constr attack substation} to Eqn. \eqref{eq:constr attack indicator}}\label{eq:bilevel 4}
\end{align}
\end{subequations}
Eqn. \eqref{eq:bilevel 1} to \eqref{eq:bilevel 2} and Eqn. \eqref{eq:bilevel 3} to \eqref{eq:bilevel 4} are known as the upper and lower level of bi-level optimization program \eqref{eq:bilevel}, respectively. We remark that although $\tilde{P}$ and $\theta$ are set as decision variables in optimization program \eqref{eq:bilevel}, they are inherently determined once the grid topology and power generation $g$ are given. Therefore, the upper level of Eqn. \eqref{eq:bilevel} is interpreted as computing the partitions of the power system using $z^o$ as a corrective measure against the malicious attack. For the power system partition $z^o$, the grid operator needs to compute power generation $g$ so that there exists some feasible post-islanding DC power flow $\tilde{P}_{i,j}$ satisfies conditions (iv)-(vi) in Definition \ref{def:proper operator strategy}.

\subsection{Double Oracle Algorithm Based Approach}

In this subsection, we present a double oracle algorithm based approach to solve Problem \ref{prob:formulation}. The proposed approach alternatively solves the upper and lower level of optimization problem \eqref{eq:bilevel}, and converges to the Stackelberg equilibrium.

\begin{center}
  	\begin{algorithm}[!htp]
  		\caption{Double Oracle Algorithm for Controlled Islanding}
  		\label{algo:DO}
  		\begin{algorithmic}[1]
  			\State Initialize a set of actions  $(\mathcal{Z}^o,G)$ for the grid operator, with each $(z^o,g)\in(\mathcal{Z}^o,G)$ being feasible to Eqn. \eqref{eq:constr indicator} to Eqn. \eqref{eq:constr power balance}
  			\State Initialize a set of actions $(Y,\mathcal{Z}^a)$ for the adversary, with each $(y,z^a)\in(\mathcal{Y},\mathcal{Z}^a)$ being feasible to Eqn. \eqref{eq:constr attack substation} to Eqn. \eqref{eq:constr attack indicator}
  			\While {not converge}
  			\State Solve for $(\mu,\tau)$ by constraining the grid operator and adversary to take actions from $(\mathcal{Z}^o,G)$ and $(\mathcal{Y},\mathcal{Z}^a)$, respectively
  			\State Compute $(z^o,g)$, assuming the adversary takes strategy $\tau$
  			\State $(\mathcal{Z}^o,G)\leftarrow(\mathcal{Z}^o,G)\cup (z^o,g)$
  			\State Solve for $(y,z^a)$, assuming the grid operator takes strategy $\mu$
  			\State $(\mathcal{Y},\mathcal{Z}^a)\leftarrow(\mathcal{Y},\mathcal{Z}^a)\cup (y,z^a)$
  			\EndWhile
  		\State \Return $(\mu,\tau)$
  		\end{algorithmic}
  	\end{algorithm}
  \end{center}

Algorithm \ref{algo:DO} presents the double oracle approach. It consists of four steps. The first step is presented in lines 1 to 2 of Algorithm \ref{algo:DO}. In this step, the algorithm initializes a set of pure strategies for the grid operator and adversary, respectively. The initialized pure strategies are proper. The second step corresponds to line 4 of Algorithm \ref{algo:DO}. In this step, the algorithm solves a mixed strategy $\mu$ for the grid operator and a pure strategy $\tau$ for the adversary. The reason that pure strategy is considered for the adversary is that it is the follower in the game, whose pure strategies suffice for best response calculation \cite{conitzer2016stackelberg}. Note that here mixed strategy $\mu$ defines a probability distribution over $(\mathcal{Z}^o,G)$, rather than the full strategy space. Similarly, best response $\tau$ gives an action selected from $(\mathcal{Y},\mathcal{Z}^a)$. Lines 5 to 6 correspond to the third step of Algorithm \ref{algo:DO}. This step computes a pure strategy for the grid operator over all the feasible strategies, given that the adversary plays strategy $\tau$. The fourth step is presented in lines 7-8 of Algorithm \ref{algo:DO}, where the adversary computes its best response to mixed strategy $\mu$. The second to the last step of Algorithm \ref{algo:DO} are executed in an iterative manner. The iteration terminates when no pure strategies for the grid operator and adversary are included in line 6 and line 8. The worst-case number of iterations Algorithm \ref{algo:DO} can take to converge is $(2^{L}-1)$, which is identical to solving for the Stackelberg equilibrium using linear program \cite{conitzer2016stackelberg}. However, implementing the linear program requires constructing the action spaces of dimensions $2^L$ for the grid operator and adversary and the corresponding constraints.

Given the current set of pure strategies $(\mathcal{Z}^o,G)$ and $(\mathcal{Y},\mathcal{Z}^a)$ for the gird operator and adversary, respectively, line 4 of Algorithm \ref{algo:DO} can be formulated as
\begin{subequations}\label{eq:MINLP}
\begin{align}
    \underset{\mu,\tau,r}{\min}&\sum_{z^o\in\mathcal{Z}^o}\sum_{z^a\in\mathcal{Z}^a}\mu(z^o)\tau(z^a)\sum_{(i,j)\in\mathcal{L}}\frac{1-z^o_{i,j}z^a_{i,j}}{2}(|P_{i,j}|+|P_{j,i}|)\label{eq:MINLP obj}\\
    \text{subject to}\quad& \sum_{z^o\in\mathcal{Z}^o}\mu(z^o)=1\label{eq:MINLP constraint 1}\\
    &\mu(z^o)\in[0,1],~\forall z^o\in \mathcal{Z}^o\label{eq:MINLP constraint 2}\\
    &\sum_{z^a\in\mathcal{Z}^a}\tau(z^a)=1\label{eq:MINLP constraint 3}\\
    &\tau(z^a)\in\{0,1\},~\forall z^a\in \mathcal{Z}^a\label{eq:MINLP constraint 4}\\
    &0\leq r-\sum_{z^o\in\mathcal{Z}^o}\mu(z^o)\sum_{(i,j)\in\mathcal{L}}\frac{1-z^o_{i,j}z^a_{i,j}}{2}(|P_{i,j}|+|P_{j,i}|)\nonumber\\
    &\quad\quad\quad\quad\quad\quad\quad\leq (1-\tau(z^a))Z,~\forall z^a\in\mathcal{Z}^a\label{eq:MINLP constraint 5}\\
    &r\geq 0\label{eq:MINLP constraint 6}\\
    &\text{Eqn. \eqref{eq:constr attack substation} to Eqn. \eqref{eq:constr power balance}}\label{eq:MINLP constraint 7}
\end{align}
\end{subequations}
where $Z$ is a sufficiently large positive constant. Optimization problem \eqref{eq:MINLP} slightly abuses the notation, and uses $\mu(z^o)$ and $\tau(z^a)$ to represent the probabilities the grid operator applies $z^o$ and the adversary applies $z^a$, respectively. Constraints \eqref{eq:MINLP constraint 1} and \eqref{eq:MINLP constraint 2} ensures that $\mu$ is a well-defined mixed strategy. Constraints \eqref{eq:MINLP constraint 3} and \eqref{eq:MINLP constraint 4} capture the fact that the adversary computes a pure strategy as its best response. Constraints \eqref{eq:MINLP constraint 5} and \eqref{eq:MINLP constraint 6} quantify the optimal power flow disruption $r$ that the adversary can cause. By Eqn. \eqref{eq:MINLP constraint 5}, we have that if the adversary plays its best response ($\tau(z^a)=1$), then it can achieve $r$ amount of power flow disruption. For $\tau(z^a)=0$, constraint \eqref{eq:MINLP constraint 5} is satisfied trivially. Constraint \eqref{eq:MINLP constraint 7} guarantees that the strategies are proper.

Optimization problem \eqref{eq:MINLP} is a mixed integer nonlinear program (MINLP). The nonlinearity can be mitigated by defining a new variable $u_{z^oz^a}$, which is defined as $u_{z^oz^a}=\mu(z^o)\tau(z^a)$ for all $z^o,z^a$ satisfying Eqn. \eqref{eq:MINLP constraint 7} and $u_{z^oz^a}=0$ otherwise. The constraints defined on $u_{z^oz^a}$ are
\begin{subequations}\label{eq:new constr}
\begin{align}
    &u_{z^oz^a}\in[0,1],~\forall z^o\in\mathcal{Z}^o,\forall z^a\in\mathcal{Z}^a,~0\leq\sum_{z^a\in\mathcal{Z}^a}u_{z^oz^a}\leq 1,~\forall z^o\in\mathcal{Z}^o\\
    &\sum_{z^o\in\mathcal{Z}^o}\sum_{z^a\in\mathcal{Z
    }^a}u_{z^oz^a}=1.
\end{align}
\end{subequations}
Using $u_{z^oz^a}$, MINLP \eqref{eq:MINLP} is converted to a mixed integer linear program (MILP):
\begin{subequations}\label{eq:MILP}
\begin{align}
    \underset{u,\tau,r}{\min}&\sum_{z^o\in\mathcal{Z}^o}\sum_{z^a\in\mathcal{Z}^a}u_{z^oz^a}\sum_{(i,j)\in\mathcal{L}}\frac{1-z^o_{i,j}z^a_{i,j}}{2}(|P_{i,j}|+|P_{j,i}|)\label{eq:MILP obj}\\
    \text{subject to}\quad& \text{Eqn. \eqref{eq:MINLP constraint 3}, \eqref{eq:MINLP constraint 4}, \eqref{eq:MINLP constraint 6}, and \eqref{eq:new constr}}\\
    &0\leq r-\sum_{z^o\in\mathcal{Z}^o}\left[\sum_{(i,j)\in\mathcal{L}}\frac{1-z^o_{i,j}z^a_{i,j}}{2}(|P_{i,j}|+|P_{j,i}|)\right]\left[\sum_{\bar{z}^a\in\mathcal{Z}^a}u_{z^o\bar{z}^a}\right]\nonumber\\
    &\quad\quad\quad\quad\quad\quad\quad\leq (1-\mathbb{P}(z^a))Z,~\forall z^a\in\mathcal{Z}^a\label{eq:MILP constr 2}
\end{align}
\end{subequations}
Similar techniques for converting MINLP to MILP have been used in \cite{paruchuri2008playing}. The equivalence between MILP \eqref{eq:MILP} and MINLP \eqref{eq:MINLP} is presented as follows.
\begin{lemma}
The MINLP \eqref{eq:MINLP} is equivalent to the MILP \eqref{eq:MILP}.
\end{lemma}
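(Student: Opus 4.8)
The plan is to establish the equivalence by exhibiting an objective-preserving correspondence between the feasible solutions of \eqref{eq:MINLP} and those of \eqref{eq:MILP}, in both directions. The heart of the argument is that the only nonlinearity in \eqref{eq:MINLP} is the bilinear product $\mu(z^o)\tau(z^a)$ appearing in \eqref{eq:MINLP obj} and \eqref{eq:MINLP constraint 5}, and that one of its factors, $\tau(z^a)$, is binary by \eqref{eq:MINLP constraint 4}. Introducing the substitution $u_{z^oz^a}=\mu(z^o)\tau(z^a)$ therefore linearizes the program \emph{exactly} rather than as a relaxation, because together with \eqref{eq:MINLP constraint 3} the variable $\tau$ encodes a pure (one-hot) strategy, so for each $z^o$ the quantities $\{u_{z^oz^a}\}_{z^a}$ place all of their mass on the single index $z^{a\star}$ with $\tau(z^{a\star})=1$. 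For brevity I would write $c_{z^oz^a}:=\sum_{(i,j)\in\mathcal{L}}\frac{1-z^o_{i,j}z^a_{i,j}}{2}(|P_{i,j}|+|P_{j,i}|)$.

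For the forward direction, I would take any $(\mu,\tau,r)$ feasible for \eqref{eq:MINLP} and set $u_{z^oz^a}=\mu(z^o)\tau(z^a)$, then verify \eqref{eq:new constr} directly: $u_{z^oz^a}\in[0,1]$ since both factors lie in $[0,1]$; $\sum_{z^a}u_{z^oz^a}=\mu(z^o)\sum_{z^a}\tau(z^a)=\mu(z^o)\in[0,1]$ using \eqref{eq:MINLP constraint 3}; and $\sum_{z^o}\sum_{z^a}u_{z^oz^a}=\sum_{z^o}\mu(z^o)=1$ using \eqref{eq:MINLP constraint 1}. The key identity $\sum_{\bar z^a}u_{z^o\bar z^a}=\mu(z^o)$ then turns \eqref{eq:MILP constr 2} into \eqref{eq:MINLP constraint 5} verbatim and turns \eqref{eq:MILP obj} into \eqref{eq:MINLP obj}. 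Hence $(u,\tau,r)$ is feasible for \eqref{eq:MILP} with identical objective value, so the optimum of \eqref{eq:MILP} is no larger than that of \eqref{eq:MINLP}.

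For the backward direction I would take any feasible (ideally optimal) $(u,\tau,r)$ for \eqref{eq:MILP} and recover $\mu(z^o):=\sum_{z^a}u_{z^oz^a}$; feasibility of \eqref{eq:MINLP constraint 1} and \eqref{eq:MINLP constraint 2} is then immediate from \eqref{eq:new constr}. The crux, and the step I expect to be the main obstacle, is showing that $u_{z^oz^a}=\mu(z^o)\tau(z^a)$ for this reconstructed $\mu$, equivalently that $u_{z^oz^a}=0$ whenever $\tau(z^a)=0$; only then do \eqref{eq:MILP obj} and \eqref{eq:MILP constr 2} collapse back to \eqref{eq:MINLP obj} and \eqref{eq:MINLP constraint 5} and yield a genuine MINLP-feasible point of equal cost. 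Because $\tau$ is one-hot with support $\{z^{a\star}\}$, the disruption mass must concentrate on $z^{a\star}$, and I would argue this either from an $u_{z^oz^a}\le\tau(z^a)$-type linking bound implicit in the intended construction or, absent an explicit such constraint, from an optimality argument: any feasible $u$ can be re-routed onto $z^{a\star}$ without changing $\sum_{\bar z^a}u_{z^o\bar z^a}$ or increasing \eqref{eq:MILP obj}, so an optimal $u$ of product form always exists. Reconciling the stated constraint set \eqref{eq:new constr} with this product structure is the one place the argument genuinely needs care; the remaining bookkeeping mirrors the forward direction and delivers the reverse inequality between the two optima, completing the equivalence.
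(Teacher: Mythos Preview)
Your overall architecture matches the paper's: introduce $u_{z^oz^a}=\mu(z^o)\tau(z^a)$, verify feasibility in both directions, and argue the objectives coincide. The forward direction is exactly what the paper does, and your verification of \eqref{eq:new constr} is correct.

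You are right to single out the backward direction as the delicate step; the paper handles it by setting $\mu(z^o)=\sum_{z^a}u_{z^oz^a}$ and checking \eqref{eq:MINLP constraint 1}, \eqref{eq:MINLP constraint 2}, and \eqref{eq:MINLP constraint 5} directly, while disposing of objective equality with the one-line remark ``by the construction of $u_{z^oz^a}$.'' Your observation that this requires $u_{z^oz^a}=0$ whenever $\tau(z^a)=0$ is precisely the missing ingredient, and your first suggested remedy---a linking bound of DOBSS type such as $\sum_{z^o}u_{z^oz^a}\le\tau(z^a)$---is the standard fix from \cite{paruchuri2008playing} that the paper cites.

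However, your fallback ``optimality'' argument does not work as written. Re-routing all $u$-mass onto the active column $z^{a\star}$ keeps $\sum_{\bar z^a}u_{z^o\bar z^a}$ fixed, but it \emph{increases} rather than decreases the MILP objective \eqref{eq:MILP obj}: by \eqref{eq:MILP constr 2} the index $z^{a\star}$ with $\tau(z^{a\star})=1$ is the one that \emph{maximizes} $\sum_{z^o}c_{z^oz^a}\mu(z^o)$, so shifting mass toward it raises the weighted sum $\sum_{z^o,z^a}u_{z^oz^a}c_{z^oz^a}$. A two-action example ($c_{AX}=10$, $c_{AY}=5$, $u_{AY}=1$) shows that, absent the linking constraint, the MILP can attain a strictly smaller value than the MINLP. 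So stick with the linking-bound route rather than the re-routing argument.
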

\begin{proof}
We first prove that the objective functions of MINLP \eqref{eq:MINLP} and MILP \eqref{eq:MILP} are identical. We then show that a feasible solution to Eqn. \eqref{eq:MINLP} is also feasible to Eqn. \eqref{eq:MILP}, and vice versa. The equivalence between \eqref{eq:MINLP obj} and \eqref{eq:MILP obj} holds by the construction of $u_{z^oz^a}$.

Let $\mu$, $\tau$, and $r$ be feasible solutions to Eqn. \eqref{eq:MINLP}. Let $u_{z^oz^a}=\mu(z^o)\tau(z^a)$. We have that $u_{z^oz^a}\in[0,1]$ holds by the construction of $u_{z^oz^a}$. By the definition of $u_{z^oz^a}$, we have that $\sum_{z^o\in\mathcal{Z}^o}\sum_{z^a\in\mathcal{Z}^a}u_{z^oz^a}=1$ holds by constraints \eqref{eq:MINLP constraint 1} and \eqref{eq:MINLP constraint 4}. Inequality $0\leq\sum_{z^a\in\mathcal{Z}^a}u_{z^oz^a}\leq 1$ holds by Eqn. \eqref{eq:MINLP constraint 3} and the definition of $u_{z^oz^a}$. Constraint \eqref{eq:MILP constr 2} follows by substituting $u_{z^oz^a}$ into Eqn. \eqref{eq:MINLP constraint 5}.

Let $u,\tau$, and $r$ be feasible to Eqn. \eqref{eq:MILP}. We prove that $\mu,\tau$, and $r$ are feasible solutions to Eqn. \eqref{eq:MINLP}, where $\mu(z^o)=\sum_{z^a}u_{z^oz^a}$. Since $\mu(z^o)=\sum_{z^a}u_{z^oz^a}$ and $\tau(z^a)\in\{0,1\}$, we have that constraint $\sum_{z^o\in\mathcal{Z}^o}\sum_{z^a\in\mathcal{Z}^a}u_{z^oz^a}=1$ implies that constraint \eqref{eq:MINLP constraint 1} holds. Using $\mu(z^o)=\sum_{z^a}u_{z^oz^a}$, constraint $0\leq\sum_{z^a\in\mathcal{Z}^a}u_{z^oz^a}\leq 1$ can be rewritten as $0\leq \mu(z^o)\leq 1$, i.e., constraint \eqref{eq:MINLP constraint 2}. Similarly, the equivalence between constraints \eqref{eq:MILP constr 2} and \eqref{eq:MINLP constraint 5} follows by $\mu(z^o)=\sum_{z^a\in\mathcal{Z}^a}u_{z^oz^a}$.
\end{proof}

Consider line 5 of Algorithm \ref{algo:DO}. This corresponds to Eqn. \eqref{eq:bilevel 1} to \eqref{eq:bilevel 2} when the strategy of the adversary is fixed. Given any feasible $(y,z^a)$ for the adversary, the grid operator solves the following optimization problem:
\begin{subequations}\label{eq:operator BR}
\begin{align}
    \min_{w,x,z^o,f,\tilde{P},g,\theta}&\sum_{(i,j)\in\mathcal{L}}(1-z_{i,j}^oz_{i,j}^a)\frac{|P_{i,j}|+|P_{j,i}|}{2}\label{eq:operator BR obj}\\
    \text{subject to}~&\text{Eqn. \eqref{eq:constr indicator} to Eqn. \eqref{eq:constr power balance}}
\end{align}
\end{subequations}
Eqn. \eqref{eq:operator BR} is an MILP and can be solved using commercial solvers. Note that optimization problem \eqref{eq:operator BR} computes a pure strategy $(z^o,g)$ for the grid operator.

In the following, we present an MILP for line 7 of Algorithm \ref{algo:DO}, which corresponds to solving Eqn. \eqref{eq:bilevel 3} to \eqref{eq:bilevel 4} when the strategy of the grid operator is given. Since the grid operator plays a mixed strategy, the goal of the adversary then becomes maximizing the expected power flow disruption, where the expectation is taken over mixed strategy $\mu$. With a slight abuse of notation, we denote the probability that the grid operator trips the transmission lines corresponding to $z^o$ as $\mu(z^o)$. The MILP corresponding to line 7 of Algorithm \ref{algo:DO} is given as
\begin{subequations}\label{eq:adv expected BR}
\begin{align}
    \max_{y,z^a}&\sum_{z^o\in\mathcal{Z}^o}\mu(z^o)\sum_{(i,j)\in\mathcal{L}}(1-z_{i,j}^oz_{i,j}^a)\frac{|P_{i,j}|+|P_{j,i}|}{2}\label{eq:adv expected BR obj}\\
    \text{subject to }~&\text{Eqn. \eqref{eq:constr attack substation} to Eqn. \eqref{eq:constr attack indicator}}
\end{align}
\end{subequations}

In the following, we show that the optimization problem \eqref{eq:adv expected BR} can be mapped to a submodular maximization problem subject to a cardinality constraint. As a consequence, a greedy algorithm is presented to solve for a pure strategy for the adversary. We relax optimization problem \eqref{eq:adv expected BR} as
\begin{subequations}\label{eq:adv submodular}
\begin{align}
    \max_{\hat{\mathcal{B}}}~&\sum_{z^o\in\mathcal{Z}^o}\mu(z^o)\left[\sum_{(i,j)\in\hat{\mathcal{L}}}\frac{|P_{i,j}|+|P_{j,i}|}{2}+\sum_{(i,j)\in\tilde{\mathcal{L}}}\frac{|P_{i,j}|+|P_{j,i}|}{2}\right]\label{eq:adv submodular obj}\\
    \text{subject to }~&\hat{\mathcal{L}}=\{(i,j)\in\mathcal{L}:i\in\hat{\mathcal{B}} \text{ or } j\in\hat{\mathcal{B}}\}\label{eq:adv submodular constr 1}\\
    &|\hat{\mathcal{B}}|\leq C\label{eq:adv submodular constr 2}
\end{align}
\end{subequations}
where $\tilde{\mathcal{L}}\subset\mathcal{L}$ is the set of transmission lines tripped by the grid operator when taking action $z^o$. We characterize the relation between optimization problem \eqref{eq:adv expected BR} and \eqref{eq:adv submodular} using the following lemma.
\begin{lemma}\label{lemma:worst case}
Given the strategy of the grid operator, the optimal solution to optimization problem \eqref{eq:adv submodular} is identical to that of optimization problem \eqref{eq:adv expected BR}.
\end{lemma}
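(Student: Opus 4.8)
The plan is to reduce the adversary's best-response program \eqref{eq:adv expected BR} to the substation-selection program \eqref{eq:adv submodular} by eliminating the line-tripping variables $z^a$ through a monotonicity argument, and then to match the two objectives term by term so that their argmaxima coincide. Throughout I abbreviate the per-line weight as $\rho_{i,j} := \tfrac{|P_{i,j}|+|P_{j,i}|}{2} \geq 0$, recalling that the flows $P_{i,j}$ are pre-attack constants, and I keep the operator's mixed strategy $\mu$ (hence each pure action $z^o \in \mathcal{Z}^o$) fixed.

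First I would examine the adversary's objective \eqref{eq:adv expected BR obj} as a function of $(y,z^a)$ for a fixed compromise set $\hat{\mathcal{B}}$ (encoded by $y$). Lowering any admissible $z^a_{i,j}$ from $1$ to $0$ changes the corresponding summand from $(1 - z^o_{i,j})\rho_{i,j}$ to $\rho_{i,j}$, and since $\rho_{i,j}\geq 0$ and $z^o_{i,j}\in\{0,1\}$ we have $\rho_{i,j} \geq (1 - z^o_{i,j})\rho_{i,j}$; as this holds for every $z^o$ and there is no cost to tripping, the expected objective is monotonically non-decreasing in the set of tripped lines. Hence it is optimal for the adversary to trip every line it is permitted to, i.e. to set $z^a_{i,j}=0$ for all $(i,j)$ with $i\in\hat{\mathcal{B}}$ or $j\in\hat{\mathcal{B}}$, which is exactly the admissibility set enforced by Eqn. \eqref{eq:constr attack indicator}. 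This collapses the free variable $z^a$ onto a quantity fully determined by $\hat{\mathcal{B}}$, namely $\hat{\mathcal{L}} = \{(i,j) : i\in\hat{\mathcal{B}}\text{ or }j\in\hat{\mathcal{B}}\}$, which is precisely constraint \eqref{eq:adv submodular constr 1}; the cardinality bound $|\hat{\mathcal{B}}|\leq C$ carries over verbatim from \eqref{eq:constr attack substation}, so the two programs share the same feasible set.

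With $z^a$ eliminated, I would rewrite the objective by partitioning $\mathcal{L}$ according to which player removes each line. For a fixed $z^o$, a line in $\hat{\mathcal{L}}$ has $z^a_{i,j}=0$ and contributes the full weight $\rho_{i,j}$; a line outside $\hat{\mathcal{L}}$ has $z^a_{i,j}=1$ and contributes $(1-z^o_{i,j})\rho_{i,j}$, i.e. the full weight exactly when it lies in the operator-tripped set $\tilde{\mathcal{L}} = \{(i,j):z^o_{i,j}=0\}$ and nothing otherwise. Collecting terms and taking the $\mu$-expectation yields an objective of the form $\sum_{z^o}\mu(z^o)\big[R(\hat{\mathcal{L}}) + R(\tilde{\mathcal{L}})\big]$, matching \eqref{eq:adv submodular obj}; together with the identical feasible set this would establish that the maximizers coincide.

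The step I expect to be the main obstacle is the bookkeeping for lines tripped by \emph{both} players, i.e. $(i,j)\in\hat{\mathcal{L}}\cap\tilde{\mathcal{L}}$. In the original objective such a line is counted once, since its summand is $(1-z^o_{i,j}z^a_{i,j})=1$, whereas the split form $R(\hat{\mathcal{L}})+R(\tilde{\mathcal{L}})$ counts it twice: the honest reduction gives $R(\hat{\mathcal{L}}\cup\tilde{\mathcal{L}}) = R(\hat{\mathcal{L}}) + R(\tilde{\mathcal{L}}) - R(\hat{\mathcal{L}}\cap\tilde{\mathcal{L}})$. Since $R(\tilde{\mathcal{L}})$ is a $\hat{\mathcal{B}}$-independent constant for fixed $z^o$, the submodular objective effectively maximizes $R(\hat{\mathcal{L}})$, while the true objective effectively maximizes $R(\hat{\mathcal{L}}) - \sum_{z^o}\mu(z^o)R(\hat{\mathcal{L}}\cap\tilde{\mathcal{L}})$. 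To conclude that the optimal \emph{attack} (not merely an upper bound on its value) is preserved, I would therefore need to argue that the overlap contribution is harmless — either because the modeling convention keeps the operator's islanding cuts and the adversary's incident-line cuts on complementary edge sets so that $\hat{\mathcal{L}}\cap\tilde{\mathcal{L}}=\emptyset$, or because $\sum_{z^o}\mu(z^o)R(\hat{\mathcal{L}}\cap\tilde{\mathcal{L}})$ is constant across the candidate solutions. Pinning down precisely why this double-count does not alter the argmax is the crux; once it is settled, equality of optimal solutions follows immediately from the monotonic reduction established above.
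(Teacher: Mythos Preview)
The paper omits its own proof of this lemma, so there is nothing to compare against directly. Your monotonicity reduction --- fixing $\hat{\mathcal{B}}$, observing that each summand $(1-z^o_{i,j}z^a_{i,j})\rho_{i,j}$ is nondecreasing as $z^a_{i,j}$ drops from $1$ to $0$, and hence concluding that at an optimum the adversary trips every incident line --- is the natural and correct way to eliminate $z^a$ and land on a problem over $\hat{\mathcal{B}}$ alone with feasible set $|\hat{\mathcal{B}}|\le C$.

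The obstacle you flag is not cosmetic; it is a genuine gap in the \emph{statement} of the lemma, not merely in your proof. After the reduction, the true objective of \eqref{eq:adv expected BR} is $\sum_{z^o}\mu(z^o)\,R(\hat{\mathcal{L}}\cup\tilde{\mathcal{L}})$, whereas \eqref{eq:adv submodular} maximizes $\sum_{z^o}\mu(z^o)\,[R(\hat{\mathcal{L}})+R(\tilde{\mathcal{L}})]$, and the difference $\sum_{z^o}\mu(z^o)\,R(\hat{\mathcal{L}}\cap\tilde{\mathcal{L}})$ depends on $\hat{\mathcal{B}}$. Neither of the two escape hatches you list is available in general: the operator's pure actions $z^o\in\mathcal{Z}^o$ are fixed before the adversary moves in this subproblem, so nothing forces $\hat{\mathcal{L}}\cap\tilde{\mathcal{L}}=\emptyset$, and the overlap term is plainly not constant across choices of $\hat{\mathcal{B}}$. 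A two-line counterexample makes this concrete: let the operator deterministically trip a single line $\ell_1$ with weight $10$; let bus $A$ be incident only to $\ell_1$ and bus $B$ only to a second line $\ell_2$ with weight $5$; with $C=1$, problem \eqref{eq:adv expected BR} prefers $B$ (disruption $15$) while \eqref{eq:adv submodular} prefers $A$ (value $20$ versus $15$).

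What \emph{does} follow from your argument is that \eqref{eq:adv submodular} is an upper-bounding relaxation of \eqref{eq:adv expected BR} (consistent with the paper calling it a ``relax[ation]''), and that after dropping the $\hat{\mathcal{B}}$-independent constant $\sum_{z^o}\mu(z^o)R(\tilde{\mathcal{L}})$ one obtains the submodular surrogate \eqref{eq:adv submodular 1}. That suffices to justify a greedy approximate oracle, but it does not deliver identity of the optimal solutions as the lemma asserts; your instinct that the double-count is the crux is exactly right, and without an extra hypothesis it cannot be closed.
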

\begin{proof}
We omit the proof due to space constraint.
\end{proof}

We now map optimization problem \eqref{eq:adv submodular} to a problem of maximizing a submodular function subject to a cardinality constraint. We define $\chi_{i,j}(\hat{\mathcal{B}})$ as 
\begin{equation}\label{eq:chi}
    \chi_{i,j}(\hat{\mathcal{B}})=\begin{cases}
    1&\mbox{ if }i\in\hat{\mathcal{B}} \text{ or }j\in\hat{\mathcal{B}}\\
    0&\mbox{ otherwise}
    \end{cases}
\end{equation}
Using the definition of $\chi_{i,j}(\hat{\mathcal{B}})$, optimization problem \eqref{eq:adv submodular} can be rewritten as
\begin{subequations}\label{eq:adv submodular 1}
\begin{align}
    \max_{\hat{\mathcal{B}}}~&\sum_{z^o\in\mathcal{Z}^o}\mu(z^o)\sum_{(i,j)\in\mathcal{L}}\chi_{i,j}(\hat{\mathcal{B}})\frac{|P_{i,j}|+|P_{j,i}|}{2}\label{eq:adv submodular obj 1}\\
    \text{subject to }~&|\hat{\mathcal{B}}|\leq C\label{eq:adv submodular constr 2 1}
\end{align}
\end{subequations}
We have the following result.
\begin{proposition}\label{prop:submodular}
Objective function \eqref{eq:adv submodular obj 1} is submodular and nondecreasing with respect to $\hat{\mathcal{B}}$.
\end{proposition}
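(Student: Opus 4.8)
The plan is to establish both properties by reducing the objective to a nonnegative weighted sum of elementary coverage functions, one per transmission line, and then invoking the fact that the classes of submodular and of monotone set functions are each closed under nonnegative linear combinations. First I would observe that for a fixed line $(i,j)$ and a fixed operator action $z^o$, the coefficient $\frac{|P_{i,j}|+|P_{j,i}|}{2}$ is a nonnegative constant that does not depend on $\hat{\mathcal{B}}$, and that $\mu(z^o)\geq 0$ since $\mu$ is a probability distribution. Hence the objective \eqref{eq:adv submodular obj 1} is a nonnegative linear combination of the atomic functions $\chi_{i,j}(\hat{\mathcal{B}})$, and it suffices to show that each $\chi_{i,j}$ is monotone nondecreasing and submodular as a set function of $\hat{\mathcal{B}}\subseteq\mathcal{B}$.

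For monotonicity, I would rewrite the definition \eqref{eq:chi} as $\chi_{i,j}(\hat{\mathcal{B}})=1$ exactly when $\hat{\mathcal{B}}\cap\{i,j\}\neq\emptyset$. If $\hat{\mathcal{B}}\subseteq\hat{\mathcal{B}}'$, then $\hat{\mathcal{B}}\cap\{i,j\}\subseteq\hat{\mathcal{B}}'\cap\{i,j\}$, so $\chi_{i,j}(\hat{\mathcal{B}})\leq\chi_{i,j}(\hat{\mathcal{B}}')$; the indicator can only rise from $0$ to $1$ as substations are added, never fall.

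For submodularity, I would use the diminishing-returns characterization: a set function $f$ is submodular if and only if for all $A\subseteq B$ and all $e\notin B$ one has $f(A\cup\{e\})-f(A)\geq f(B\cup\{e\})-f(B)$. The marginal gain of adding $e$ to a set $S$ for $\chi_{i,j}$ equals $1$ precisely when $S\cap\{i,j\}=\emptyset$ and $e\in\{i,j\}$, and equals $0$ in every other case. The short argument is then that whenever the marginal gain at the larger set $B$ equals $1$, we must have $B\cap\{i,j\}=\emptyset$ and $e\in\{i,j\}$; since $A\subseteq B$ forces $A\cap\{i,j\}=\emptyset$ as well, the marginal gain at $A$ is also $1$. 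Thus the marginal gain at $A$ is always at least the marginal gain at $B$, which is exactly the required inequality.

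Finally I would assemble the pieces: since each $\chi_{i,j}$ is monotone nondecreasing and submodular, and both properties are preserved under nonnegative linear combinations, the objective \eqref{eq:adv submodular obj 1} inherits both, proving the proposition. The only point requiring genuine care — and the closest thing to an obstacle — is the marginal-gain case analysis for $\chi_{i,j}$, in particular tracking the two endpoints $i$ and $j$ simultaneously so that the ``already covered'' case is handled correctly; the remainder is routine bookkeeping over the nonnegative weights $\mu(z^o)$ and $\frac{|P_{i,j}|+|P_{j,i}|}{2}$.
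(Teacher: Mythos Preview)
Your proposal is correct and follows essentially the same approach as the paper: both reduce the objective to a nonnegative linear combination of the elementary coverage indicators $\chi_{i,j}$, verify submodularity of each $\chi_{i,j}$ via the diminishing-returns marginal-gain argument, verify monotonicity directly, and then invoke closure of these properties under nonnegative sums. Your treatment of monotonicity via the intersection characterization $\hat{\mathcal{B}}\cap\{i,j\}\neq\emptyset$ is somewhat cleaner than the paper's case-by-case analysis, but the underlying strategy is identical.
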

\begin{proof}
We first prove Eqn. \eqref{eq:adv submodular obj 1} is submodular with respect to $\hat{\mathcal{B}}$ using the definition of submodularity. Let $\hat{\mathcal{B}}_2\subseteq\hat{\mathcal{B}}_1\subset\mathcal{B}$. By Eqn. \eqref{eq:chi}, we have that
\begin{equation}
    \chi_{i,j}(\hat{\mathcal{B}}\cup\{h\})-\chi_{i,j}(\hat{\mathcal{B}})=\begin{cases}
    1&\mbox{ if } h=i \text{ or } h=j \text{ and } i,j\notin \hat{\mathcal{B}}\\
    0&\mbox{ otherwise}
    \end{cases},~\forall\hat{\mathcal{B}}\subset\mathcal{B}
\end{equation}
Suppose that $\chi_{i,j}(\hat{\mathcal{B}}_1\cup\{h\})-\chi_{i,j}(\hat{\mathcal{B}}_1)=1$. Since $\hat{\mathcal{B}}_2\subseteq\hat{\mathcal{B}}_1$, we have that $h=i$ or $h=j$ and $i,j\notin \hat{\mathcal{B}}_2$. Then we have that $\chi_{i,j}(\hat{\mathcal{B}}_2\cup\{h\})-\chi_{i,j}(\hat{\mathcal{B}}_2)=1$. Therefore, we have that
$
    \chi_{i,j}(\hat{\mathcal{B}}_2\cup\{h\})-\chi_{i,j}(\hat{\mathcal{B}}_2)\geq \chi_{i,j}(\hat{\mathcal{B}}_1\cup\{h\})-\chi_{i,j}(\hat{\mathcal{B}}_1)
$ holds for all $\hat{\mathcal{B}}_2\subseteq \hat{\mathcal{B}}_1\subset\hat{\mathcal{B}}$,
which implies that Eqn. \eqref{eq:chi} is submodular with respect to $\hat{\mathcal{B}}$.

Consider $\hat{\mathcal{B}}_2\subset\hat{\mathcal{B}}_1$. Then there must exist some $i\in\hat{\mathcal{B}}_1$ while $i\notin\hat{\mathcal{B}}_2$. Let $j\in\mathcal{B}$ be a substation satisfying $j\notin\hat{\mathcal{B}}_1$. Using Eqn. \eqref{eq:chi}, we have that $\chi_{i,j}(\hat{\mathcal{B}}_2)=0\leq \chi_{i,j}(\hat{\mathcal{B}}_1)=1$. Let $j\in\mathcal{B}$ be a substation satisfying $j\in\hat{\mathcal{B}}_2$. Then we have  that $\chi_{i,j}(\hat{\mathcal{B}}_2)= \chi_{i,j}(\hat{\mathcal{B}}_1)=0$. If $j\in\hat{\mathcal{B}}_1$ holds while $j\notin\hat{\mathcal{B}}_2$ does not hold. Then we have that $\chi_{i,j}(\hat{\mathcal{B}}_2)=0\leq \chi_{i,j}(\hat{\mathcal{B}}_1)=1$. Summarizing the arguments above, we have that Eqn. \eqref{eq:chi} is nondecreasing with respect to $\hat{\mathcal{B}}$.

Combining the arguments above, we have that Eqn. \eqref{eq:adv submodular obj 1} is a summation of non-negative submodular and nondecreasing functions. Therefore, Eqn. \eqref{eq:adv submodular obj 1} is a submodular and nondecreasing function with respect to $\hat{\mathcal{B}}$.
\end{proof}

According to Proposition \ref{prop:submodular}, optimization problem \eqref{eq:adv expected BR} is equivalent to a submodular maximization problem with cardinality constraint. Optimization problem \eqref{eq:adv submodular 1} can be solved using a greedy algorithm in polynomial time \cite{krause2014submodular}. It has been shown that the greedy algorithm achieves $1-\frac{1}{e}$ optimality guarantee \cite{krause2014submodular}.
We conclude this section by giving the convergence and optimality of double oracle algorithm \cite{mcmahan2003planning}. We state the result in the following lemma.
\begin{lemma}
Algorithm \ref{algo:DO} converges to the Stackelberg equilibrium within finitely many iterations if the best responses in line 5 and line 7 are calculated exactly.
\end{lemma}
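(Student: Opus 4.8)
The plan is to treat Algorithm \ref{algo:DO} as an instance of the double oracle method applied to the underlying finite two-player zero-sum game, and to establish two things: (i) the loop terminates after finitely many iterations, and (ii) the restricted-game equilibrium returned at termination is an equilibrium of the full game, hence a Stackelberg equilibrium. The first observation I would record is that, although the operator's pure strategy $(z^o,g)$ carries the continuous generation vector $g$, the payoff in \eqref{eq:power flow disruption} and \eqref{eq:bilevel 1} depends only on the discrete tripping indicators $z^o$ and $z^a$ (the flows $|P_{i,j}|$ are fixed pre-disturbance constants). Thus, up to payoff equivalence, each player has at most $2^L$ distinct pure strategies, so for the purposes of convergence the game is a finite zero-sum matrix game; the vector $g$ enters only through properness and is carried along as a feasibility certificate for a given $z^o$.

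Next I would establish termination. At each iteration the restricted sets $(\mathcal{Z}^o,G)$ and $(\mathcal{Y},\mathcal{Z}^a)$ only grow, since lines 6 and 8 form unions. By the stated stopping rule, if the algorithm does not halt then at least one of the exactly computed best responses in line 5 or line 7 is \emph{not} already present, so at least one restricted set acquires a new payoff-distinct pure strategy. Because the number of payoff-distinct pure strategies is finite (bounded by $2^L$ per player), a strictly new strategy can be added only finitely often; hence the loop runs at most $2^L-1$ times for each player before no new strategy can be produced, at which point the stopping rule fires. This reproduces the worst-case bound already noted after Algorithm \ref{algo:DO}.

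For correctness, let $U(\mu,\tau)$ denote the expected disruption, minimized by the operator and maximized by the adversary, and let $V^\star=\min_\mu\max_\tau U(\mu,\tau)$ be the value of the full zero-sum game; this exists by the minimax theorem and coincides with the Stackelberg value by the zero-sum property invoked after Problem \ref{prob:formulation}. Let $(\mu_t,\tau_t)$ be the restricted equilibrium from line 4 at the terminating iteration, with restricted value $v_t$. I would then sandwich $V^\star$ between the two best-response values. Since $\tau_t$ is a feasible adversary strategy, the operator best response of line 5 obeys $\min_{z^o}U(z^o,\tau_t)=\min_\mu U(\mu,\tau_t)\le\max_\tau\min_\mu U(\mu,\tau)=V^\star$; since $\mu_t$ is a feasible operator strategy, the adversary best response of line 7 obeys $\max_{z^a}U(\mu_t,z^a)=\max_\tau U(\mu_t,\tau)\ge\min_\mu\max_\tau U(\mu,\tau)=V^\star$. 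At termination neither best response is new, so the line-5 full best response attains the same value as the operator's best response \emph{within} the restricted set, which equals $v_t$, and likewise the line-7 full best response attains $v_t$. Therefore $v_t=\min_{z^o}U(z^o,\tau_t)\le V^\star\le\max_{z^a}U(\mu_t,z^a)=v_t$, forcing equality throughout; hence $v_t=V^\star$, $\mu_t$ guarantees the operator disruption at most $V^\star$ against every adversary response, $\tau_t$ guarantees the adversary at least $V^\star$, and $(\mu_t,\tau_t)$ is an equilibrium of the full game.

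The step I expect to be the main obstacle is this correctness-at-termination argument: the finiteness and monotone-growth parts are routine, but showing that the stopping condition collapses both best-response values onto the restricted value $v_t$, and thereby onto $V^\star$, is where the zero-sum structure and the minimax identity must be used carefully. A second point demanding explicit attention — and the reason for the hypothesis that the best responses are computed \emph{exactly} — is that the greedy submodular routine underlying Proposition \ref{prop:submodular} yields only a $(1-\frac{1}{e})$-approximate adversary best response; with an approximate oracle the sandwich would close only up to the approximation error, so exactness in lines 5 and 7 is essential for convergence to an exact Stackelberg equilibrium.
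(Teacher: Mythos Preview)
Your proposal is correct. The paper itself does not supply a proof of this lemma; it simply invokes the convergence and optimality of the double oracle algorithm from \cite{mcmahan2003planning} and states the result. Your argument is precisely the standard double-oracle convergence proof from that reference, instantiated for the present setting: you correctly reduce the operator's strategy space to finitely many payoff-distinct actions by observing that the objective depends only on $z^o$ (not on the generation vector $g$), you obtain finite termination from monotone growth of the restricted sets, and you close with the minimax sandwich showing that once neither oracle produces a new strategy the restricted value coincides with the full value. Your remark that the hypothesis of exact best-response oracles is what rules out the $(1-\tfrac{1}{e})$-approximate greedy routine is also well taken and is implicit in the lemma's phrasing. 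In short, you have supplied the details the paper leaves to its citation, and there is no gap.
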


\section{Numerical Evaluations}\label{sec:simulation}

This section presents our simulation setup and numerical results. We use IEEE 9-bus, 14-bus, 30-bus, 39-bus, 57-bus, and 118-bus power systems in our evaluations \cite{IEEEBusRef}. All the experiments are implemented using MATLAB R2020a on a workstation with Intel(R) Xeon(R) W-2145 CPU with 3.70GHz processor and 128GB memory. Simulation codes can be found at \cite{sdinuka}.

\begin{table}[]
    \centering
\begin{tabular}{ |p{1.2cm}||p{2cm}|p{4.2cm}|p{2.2cm}|p{1.6cm}| }
 \hline
 IEEE Dataset  & Reference Generators & Coherent Generator Groups (Using Bus Indices) & Maximum \#~of Iterations & Maximum Run time\\
 \hline
 9-Bus  & 1;~3 & \{1,2\}, \{3\} & 1 & 0.07~s\\
 14-Bus  & 1;~6 & \{1:3\}, \{6,8\}  & 7  & 0.31~s\\
 30-Bus  & 1;~13;~22 & \{1,2\}, \{13\}, \{22,23,27\} & 6 & 0.43~s \\
 39-Bus  & 30;~31;~37 & \{30\}, \{31:36\}, \{37:39\} & 9 & 1.11~s\\
 57-Bus  & 1;~6;~9 &\{1:3\}, \{6,8\}, \{9,10\} & 21 & 5.23~s\\
 118-Bus  & 10; 46; 49; 87 & \{10,12,25,26,31\}, \{46\}, \{49,54,59,61,65,66,69,80\}, \{87,89,100,103,111\} & 16 &  62.70~s\\
 \hline
\end{tabular}
    \caption{First three columns show IEEE power system case study data used in the numerical evaluations. Last two columns present the maximum number of iterations and maximum run time that Algorithm~1 takes to converge. The maximum values in the last two columns are found across a set of experiments where the adversary budget ($C$) is increased from $C = 1$ until adversarial actions cause the power system to fail.}
    \label{tab:data}
\end{table}
\subsection{Simulation Setup}

We extract the topology of power system, transmission line susceptance, load demands, generator capacities, transmission line capacities, and voltage angle bounds from the IEEE case study datasets \cite{IEEEBusRef}. The power flow in the system at the initial operating point is computed using Matpower \cite{zimmerman2010matpower}. The reference generators and generator coherent groups are chosen as in Table~1.

 We initialize Algorithm~\ref{algo:DO} using adversary \emph{pure} strategies that result in the four largest DC power flow disruptions to the system, a grid operator \emph{pure} islanding strategy in the absence of any adversary (solution to optimization problem in Eqn.~\eqref{eq:operator BR} with $z_{i,j}^a=1$ for all $(i,j)\in\mathcal{L}$) and corresponding grid operator and adversary best responses, respectively.

We evaluate the performance of our approach by comparing the DC power flow disruption resulting from Algorithm~\ref{algo:DO} and a baseline case. A grid operator in the baseline case computes an islanding strategy without considering the presence of an adversary, i.e., the grid operator solves Eqn. \eqref{eq:operator BR} with $z_{i,j}^a=1$ for all $(i,j)\in\mathcal{L}$. The adversary in baseline observes the islanding strategy computed by the grid operator, and computes its best response by solving Eqn.~\eqref{eq:adv}. Let $\mathcal{S}_{\textbf{S}}$ and $\mathcal{S}_{\textbf{B}}$ denote the transmission lines tripped in the proposed model (Algorithm~\ref{algo:DO}) and baseline case, respectively. We denote the DC power flow disruption corresponding to Algorithm~\ref{algo:DO} and baseline as $R(\mathcal{S}_{\textbf{S}})$ and $R(\mathcal{S}_{\textbf{B}})$, respectively.
\subsection{Case Study Results}

Figure~\ref{fig:IEEE-39} illustrates the grid operator islanding strategy and adversary strategy obtained using Algorithm~1 on IEEE 39-bus data for adversary budget, $C=8$. The grid operator performs islanding strategy~1 with probability (w.p.) 0.27 and islanding strategy~2 w.p. 0.73. In this case study we obtain $R(\mathcal{S}_{\textbf{B}}) \approx 10.04$~GW and $R(\mathcal{S}_{\textbf{S}})\approx 10.22$~GW. Hence, by committing to the islanding strategy given by Algorithm~1, the grid operator incurs $\sim180$~MW less DC power flow disruption.

Figure~\ref{fig:performance-results}-(a) plots the reduced DC power flow disruption achieved by the grid operator via committing to an islanding strategy given by Algorithm~1 (i.e., $R(\mathcal{S}_{\textbf{B}})$ - $R(\mathcal{S}_{\textbf{S}})$) for different values of adversary budget, $C$, under each test case given in Table~1. We construct a set of \emph{attack scenarios} by increasing the values of $C$ from $C = 1$ until the value of $C$ breaks down the grid (i.e., all the generators are isolated into individual islands). The results show that the grid operator achieves a better performance by committing to a strategy of Algorithm~1 under some attack scenarios and in other scenarios the grid operator achieves the same performance as committing to a baseline strategy. Algorithm~1 and the baseline achieve same performance when equilibrium strategies of the adversary and the grid operator do not contain any common set of transmission lines.

\begin{figure}[t]
\centering
\includegraphics[width=0.9\textwidth]{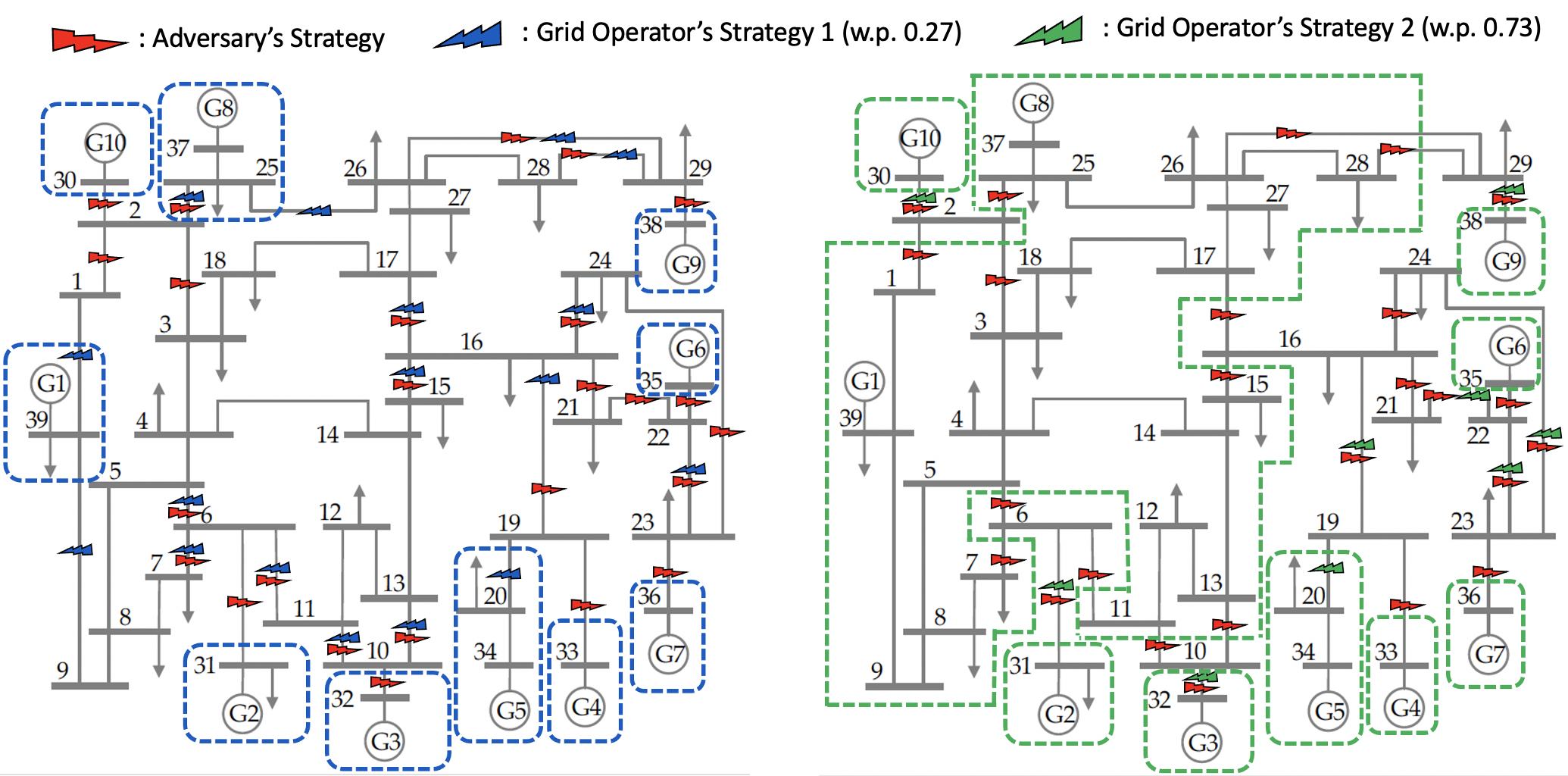}
\caption{Grid operator and adversary strategies obtained using Algorithm~1 on IEEE 39-bus data for adversary budget, $C=8$. The islands induced by the grid operator and adversary strategies are marked by the dotted lines.}
\label{fig:IEEE-39}
\end{figure}

\begin{figure}[ht]
	\centering
	\begin{subfigure}{.5\textwidth}
		\centering
		\includegraphics[width=1\textwidth]{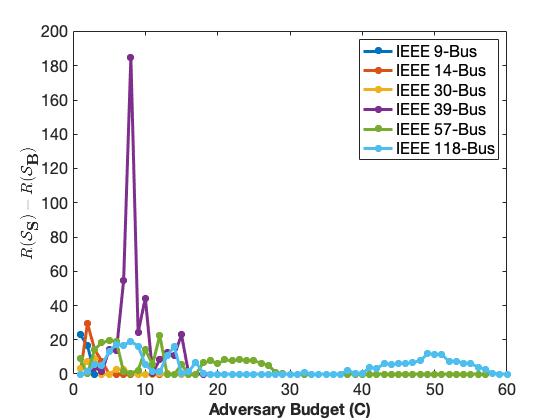}
		\caption{}
		\label{fig:budget-payoff}
	\end{subfigure}%
	\begin{subfigure}{.5\textwidth}
		\centering
		\includegraphics[width=1\textwidth]{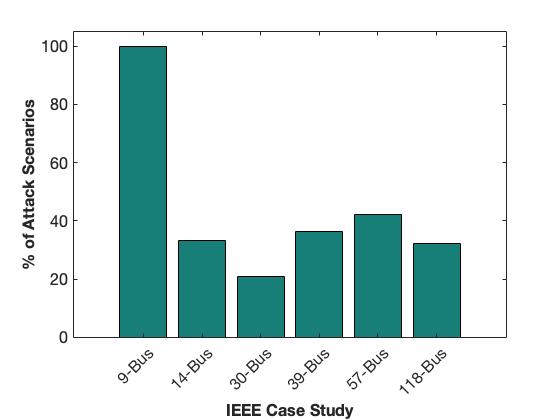}
		\caption{}
		\label{fig:percent-attk}
	\end{subfigure}
	\caption{Figure~\ref{fig:performance-results}-(a) shows the reduced DC power flow disruption achieved by the grid operator via committing to an islanding strategy given by Algorithm~1 ($R(\mathcal{S}_{\textbf{B}})$) compared to a baseline case ($R(\mathcal{S}_{\textbf{S}})$). 
	Figure~\ref{fig:performance-results}-(b) shows the percentage of attack scenarios where the grid operator performs better by committing to a strategy given by Algorithm~1.}
	\label{fig:performance-results}
\end{figure}


Figure~\ref{fig:performance-results}-(b) shows the percentage of attack scenarios where the grid operator is able to achieve lower DC power flow disruption by committing to a strategy given by Algorithm~1. We only consider the attack scenarios that does not break down the grid when computing the related percentage values. The results suggest that on average grid operator is able to perform better in $44\%$ of the attack scenarios and save $12.27$ MW when committing to a strategy of Algorithm~1.

Last two columns of Table~1 present the maximum number of iterations and maximum run time of Algorithm~1 to converge across the attack scenarios considered under each case study. 
The results show that Algorithm~1 takes less than $21$ iterations to converge for the cases analyzed. Also, run time to converge is less than $63$ seconds in Algorithm~1 for the largest dataset (IEEE-118 bus) analyzed. For other cases, Algorithm~1 finds the optimal strategies in less than $5.23$ seconds. Note that the worst-case number of iterations Algorithm \ref{algo:DO} can take to converge is $(2^{L}-1)$, (i.e., worst-case computation time is exponential in $L$). Therefore, the results suggest that Algorithm~1 converges with substantially less number of iterations compared with the worst-case bound.





\section{Conclusion}\label{sec:conclusion}

In this paper, we studied the problem of controlled islanding of a power system in the presence of a malicious adversary. We formulated the interaction between the grid operator and adversary as a Stackelberg game. The grid operator first synthesizes a mixed strategy for controlled islanding, as well as the power generation for the post-islanding system. The adversary observes the islanding strategy of the grid operator. The adversary then compromises a subset of substations in the power system and trips the transmission lines that are connected with the compromised substations. We formulated an MINLP to compute the Stackelberg equilibrium of the game. To mitigate the computational challenge incurred by solving MINLP, we proposed a double oracle algorithm based approach to solve for the equilibrium strategies. The proposed approach solved a sequence of MILPs that model the best responses for both players. Additionally, we proved that the adversary's best response can be formulated as a submodular maximization probelm under a cardinality constraint. We compared the proposed approach with a baseline, where the grid operator computes an islanding strategy by minimizing the power flow disruption without taking into account the adversary's response, using IEEE 9-bus, 14-bus, 30-bus, 39-bus, 57-bus, and 118-bus systems. The proposed approach outperformed the baseline in about $44\%$ of test cases and saved about 12.27 MW power flow disruption on average.

\bibliographystyle{splncs04}
\bibliography{GameSec21}

\end{document}